\newenvironment{hangref}
  {\begin{list}{}{\setlength{\itemsep}{4pt}
  \setlength{\parsep}{0pt}\setlength{\leftmargin}{+\parindent}
  \setlength{\itemindent}{-\parindent}}}{\end{list}}
\def\qed{\relax\ifmmode\hskip2em \fbox{ }\else\unskip\nobreak\hskip1em 
$\fbox{}$\fi}
\newsavebox{\theorembox}
\newsavebox{\lemmabox}
\newsavebox{\corollarybox}
\newsavebox{\propositionbox}
\newsavebox{\examplebox}
\newsavebox{\propertybox}
\savebox{\theorembox}{\bf Theorem}
\savebox{\lemmabox}{\bf Lemma}
\savebox{\corollarybox}{\bf Corollary}
\savebox{\propositionbox}{\bf Proposition}
\savebox{\examplebox}{\bf Example}
\savebox{\propertybox}{\bf Property}
\newtheorem{theorem}{\usebox{\theorembox}}
\newtheorem{lemma}[theorem]{\usebox{\lemmabox}}
\newtheorem{definition}{{\sc Definition}\rm }[section]
\newtheorem{definitions}[definition]{{\sc Definitions\rm }}
\newenvironment{proof}{{\noindent\bf Proof~}}{\(\qed\)\vspace*{\proofskip} }
\newlength{\proofskip}
\begin{document}

\begin{center}

{\LARGE 
Fast Algorithms for Exact String Matching
}


\footnotesize

\mbox{\large Srikrishnan Divakaran}\\
DA-IICT, 
Gandhinagar, Gujarat, India 382007,
\mbox{
Srikrishnan\_divakaran@daiict.ac.in }\\[6pt]
\normalsize
\end{center}

\baselineskip 20pt plus .3pt minus .1pt


\noindent 
\begin{abstract}
Given a pattern string $P$ of length $n$ and a query string $T$ of length $m$, where     the characters of $P$ and $T$  are drawn 
from an alphabet          of size $\Delta$,    the {\em exact string matching} problem consists of finding all occurrences of $P$ 
in $T$. For this problem, we present algorithms      that in $O(n\Delta^2)$ time preprocess $P$ to essentially identify $sparse(P)$, a  rarely occurring substring of $P$, and  then use it to find occurrences of $P$ in $T$ efficiently. Our algorithms require  a  worst      case search time of $O(m)$, and expected search time of $O(m/min(|sparse(P)|, \Delta))$,
where $|sparse(P)|$ is atleast $\delta$ (i.e. the number of distinct characters in $P$), and for most pattern strings it is 
observed to be $\Omega(n^{1/2})$. 
\end{abstract}
\bigskip
\noindent {\it Key words:}  
Keywords: Exact String Matching;  Combinatorial   Pattern     Matching; Computational   Biology; Bio-informatics; Analysis of 
Algorithms; Fast Heuristics.
\noindent\hrulefill
\section{Introduction}
Given a pattern string $P$ of length $n$ and a query string $T$ of length $m$, where the characters of $P$ and $T$  are drawn 
from an alphabet          of size $\Delta$,    the {\em exact string matching} problem consists of finding all occurrences of 
$P$ in $T$. This is a  fundamental  problem  with wide range of applications in Computer Science (used    in   parsers,  word
processors, operating systems,  web  search engines, image processing and natural language processing),    Bioinformatics and 
Computational Biology (Sequence   Alignment and Database Searches). The algorithms  for   exact   string matching   can    be 
broadly categorized into the following categories: (1) character            based  comparison algorithms, (2) automata  based 
algorithms, (3) algorithms based on bit-parallelism and (4) constant-space algorithms. In   this  paper, our      focus is on 
designing    efficient         character based comparison algorithms for exact string matching. For a comprehensive survey of 
all categories of exact string matching algorithms, we refer the readers to Baeza-Yates[17], Gusfield[25], Charras et al [28]
, Chochemore et al [29] and Faro et al [30]. 
\newline \newline
A Typical character based comparison algorithm  can    be     described within the following  general   framework as follows:
\begin{itemize} 
\item [(1)]
First, initialize the search window to be the first $n$ characters of the query string $T$ (i.e. align the $n$ characters  of 
the pattern string $P$ with the  first $n$ characters of $T$). 
\item [(2)] 
Repeat the following until the search      window is  no  longer contained within the query string $T$:
\begin{quote} 
inspect  the aligned pairs in some order until there is either a mis-match in an  aligned pair or there is   a complete match 
among all the    $n$ aligned pairs. Then       shift the search window to the right. The order in which the aligned pairs are 
inspected and the length by  which the search window is shifted differs from one algorithm to another.
\end{quote} 
\end{itemize} 
The mechanism that the above framework provides is usually referred to as the {\em sliding window mechanism} [30, 31]. 
The algorithms that employ the sliding window mechanism can be further  classified  based on the order in which they inspect 
the  aligned pairs into the following broad categories: (1) left to right scan; (2) right to left scan; (3) scan in specific 
order, and (4) scan in   random order or scan order is not relevant. The algorithms that inspect the aligned pairs from left 
to right are the most natural algorithms; the algorithms that inspect the aligned pairs from right to left generally perform 
well in practice; the algorithms that inspect the aligned pairs  in a  specific  order  yield  the  best theoretical bounds. 
For    a     comprehensive    description   of the exact string matching algorithms and access to an excellent framework for 
development, testing and analysis of exact string matching algorithms, we refer the readers to the {\em smart tool}  (string 
matching research tool) of Faro and T. Lecroq[31]. \newline \newline 
For algorithms that inspect aligned pairs from left to right, Morris and Pratt [1] proposed the first known       linear time 
algorithm. This algorithm was improved by Knuth, Morris and Pratt [4] and        requires $O(n)$    preprocessing time and a 
worst case search time of at most $2m-1$ comparisons. For small     pattern strings and reasonable probabilistic assumptions 
about the distribution of characters in the query string, hashing [2,9] provides an $O(n)$ preprocessing  time and $O(m)$ worst 
case search time solution. For pattern strings  that fit within a  word   of    main memory, Shift-Or [16,21] requires $O(n + 
\Delta)$ preprocessing time and a search time of $O(m)$ and  can     also be easily adapted  to solve     approximate string 
matching problems. 
For algorithms that inspects aligned pairs from right to left, The Boyer-Moore [3] algorithm is one of the classic algorithms 
that requires $O(n +\Delta)$ preprocessing time and a worst case search time of $O(nm)$ but in practice is  very fast. There 
are   several variants that simplify the Boyer-Moore algorithm and mostly  avoid its quadratic behaviour. Among the variants 
of Boyer-Moore, the    algorithms of   Apostolico and Giancarlo [7,24],  Crochemore et al [13, 23] (Turbo BM), and Colussi (Reverse Colussi) [12, 22]  
have $O(m)$ worst case search time and are efficient in minimizing the number of character comparisons, whereas the   Quick 
Search [10], Reverse Factor [19], Turbo Reverse Factor [24], Zhu and Takaoka [8] and Berry-Ravindran [27] algorithms are very efficient in practice.
For Algorithms that inspects the  aligned pairs in a specific order, Two Way algorithm [13], Colussi [12], Optimal Mismatch 
and Maximal Shift [10], Galil-Giancarlo [18],    Skip Search , KMP Skip Search and Alpha Skip Search [26] are some of the well known algorithms.
Two way algorithm was the first known linear time optimal space algorithm. The     Colussi      algorithm      improves the 
Knuth-Morris-Pratt algorithm and requires at most 3/2n text character comparisons in the worst case. The    Galil-Giancarlo 
algorithm improves the Colussi algorithm in one special case which enables it to perform at most        4/3n text character
comparisons in the worst case.  
For Algorithms that inspects the aligned pairs in any order,the Horspool [5], Quick Search [10], Tuned Boyer-Moore [14], Smith [15] and Raita [20]
algorithms are some of the well known algorithms. All these algorithms have   worst case search time that is quadratic but 
are known to perform well in practice. 
\newline \newline 
{\bf Our Results}: In this paper, we present two similar algorithms $A$ and $B$ for exact     string    matching. Both these 
algorithms   employ   sliding     window mechanism, preprocess $P$ in $O(n\Delta^2)$ time to essentially identify $sparse(P)
$, a rarely occurring substring of $P$ characterized by two characters of $P$, and then   use it to find  occurrences of $P$ 
in $T$ efficiently.
Algorithms   $A$      and $B$ have worst case search times of $O(m)$ and $O(mn)$ respectively. However, both of them have an 
expected search time  of $O(m/min(|sparse(P)|,\Delta))$. The main difference between these two algorithms is that  Algorithm $A$   inspects   the   aligned pairs in  the  search  window  in the order specified by Apostolico-Giancarlo's [7] Algorithm, whereas Algorithm $B$ inspects the aligned pairs in random order. This makes algorithm $B$ much simpler than $A$ and equally effective in practice. In  terms
of     preliminary empirical analysis, for most pattern strings $P$, we observe that $|sparse(P)|$ is  $\Omega(n^{1/2})$.
We also believe that a tighter analysis of our algorithms can result in sub-linear worst case run-time from the perspective of
randomized analysis. \newline \newline 
The  rest    of this paper   is    organized as follows: In Section $2$, we present  our     Algorithms 
$A$ and $B$. In Section $3$, we present the analysis of these  algorithms, and in Section $4$   we present   our conclusions 
and future work.
\section{Algorithms for Exact String Matching}
In this section, we present two   similar     Algorithms $A$ and $B$, that given a pattern string $P$ and a query string $T$, finds   all occurrences of $P$ within $T$. First,we introduce some definitions that are essential for defining our Algorithms 
$A$ and $B$. Then, we present Algorithms $A$ and $B$.
\begin{definitions}
Given a pattern string $P$ of length $n$ and a query string $T$ of length $m$, we define $N_{i}(P)$, $i \in [1..n]$, denote the length of the longest suffix  of  $P[1..i]$ that matches a suffix of $P$, and $M$ to be a  $m$ length vector whose   jth entry $M[j] = k$ indicates that a          suffix of $P$ of length at least $k$ occurs in $T$ and ends at position $j$. (See Apostolico-Giancarlo Algorithm [7]). 
\end{definitions}
\begin{definitions}
Given a pattern string $P$, and an ordered   pair of characters $u, v \in \Sigma$ (not necessarily distinct), we define $sparse^{(u,v)}(P)$, the $2$-sparse pattern of $P$ with respect to $u$ and $v$, to be the rightmost     occurrence of a substring of $P$ of longest length that starts with $u$, ends with $v$, but does not contain $u$ or $v$      within it. We define $sparse(P)$  to   be the longest among the $2$-sparse patterns of $P$.
\end{definitions}
\begin{definitions}
Given $sparse(P)$, the   longest   $2$-sparse pattern of $P$, we define $startc(P)$ and $endc(P)$ to be  the respective first and last characters of $sparse(P)$, and   $startpos(P)$    and    $endpos(P)$  be the respective   indices of the first and last characters of $sparse(P)$ in $P$. For $c \in \Sigma$, if $c \in sparse(P)$,       $shift^{c}(P)$ is  the distance between  the rightmost occurrence of $c$ in $sparse(P)$ and the  last character of $sparse(P)$.  If $c$ is not present in $P$ then $shift^{c}(P)$  is set to $n$, the length of $P$. If $c$ is present in $P$ but not in $sparse(P)$ then $shift^{c}(P)$ is set to $|sparse(P)|+1$. \end{definitions}
{\bf BASIC IDEA}:  First, we preprocess $P$ to identify $sparse(P)$, {\it a rarely occurring substring of $P$ characterized 
by two characters in $P$}, and compute statistics of its occurrence relative to   other characters in $P$. Then, during the 
search   phase, we set the search window to be the first $n$ characters of $T$ (i.e. align the $n$ characters of $P$ to the 
first $n$ characters of $T$). Then, we do the following repeatedly until the search window reaches the end of $T$:
\begin{quote}
Check whether there is a match between the first and last characters of $sparse(P)$ and their respective  aligned characters in the search window. In the case    of a    match, Algorithm $A$ (Algorithm $B$) invokes $Apostolico-Giancarlo$ Algorithm ($Random-Match$) to     look    for      exact     match between $P$ and the characters in the search window. If there is an exact match then it reports the match, shifts the search window by at least the length of $sparse(P)$ and   then  continues. However, if   there   is      a mismatch in either the first or last  character of $sparse(P)$ or during the invocation of Apostolico-Giancarlo (Random-Match), then it shifts the search window based on the  statistics of $sparse(P)$'s occurrence relative to the mismatched character and   then continues. 
\end{quote}
{\bf DESCRIPTION OF ALGORITHM $A$}: We first describe the preprocessing and search phases of Algorithm $A$. Then, we 
present Algorithm $A$ formally. \newline \newline 
{\bf Preprocessing Phase}: In the pre-processing phase, we first compute $N_{i}(P)$, for $i \in [1..n]$, where $N_{i}(P)$   is    the   longest suffix of $P[1.i]$ that matches a suffix of $P$. Second, for each character $c \in \Delta$, we determine 
the list of indices in $P$ where it occurs, and from this we determine $sparse(P)$, the longest $2$-sparse pattern of $P$, as follows:
\begin{itemize} 
\item[(a)] For each ordered  pair of  characters $a, b \in P$ (not necessarily distinct), determine $sparse^{(a,b)}(P)$,the $2$-sparse pattern with respect  to  $a$  and $b$;
\item[(b)] Choose the longest among the $2$-sparse patterns determined in $(a)$.
\end{itemize} 
Third, from $sparse(P)$,  we determine    $startc(P)$ and $endc(P)$, the respective first and last  characters of $sparse(P)$, and   $startpos(P)$   and  $endpos(P)$, the respective indices within $P$ of the first and last  characters of $sparse(P)$. Finally, for $c \in \Sigma$, determine $shift^{c}(P)$, the distance   between  the rightmost occurrence of  $c$ in $sparse(P)$ and the  last character of $sparse(P)$. \newline 
{\bf Search Phase}: In the search phase, we first set the search window to be the      first $n$ characters of $T$ (i.e. align the $n$ characters of $P$ to the first $n$ characters of $T$). Then, while   the search window is contained within the query string $T$,  compare $endc(P)$ and $startc(P)$  (i.e the last and first characters of $sparse(P)$)  with the respective characters at offset $endpos(P)$ and  $startpos(P)$  within the search window. The following three dis-joint scenarios (events) are possible:
\begin{itemize}
\item[(i)] [Type-1 event] If the  character $c$ at offset $endpos(P)$ within the search window is not $endc(P)$ then we shift  the search window to the right by $shift^{c}(P)$;
\item[(ii)] [Type-2 event] If   the   character  $c$ at  offset $endpos(P)$ within the search window is $endc(P)$ but the character $d$ at offset   $startpos(P)$ is  not $startc(P)$ then we shift the search window to    the      right by $|sparse(P)|+1$($|sparse(P)|$) when  $c$ and $d$ are different characters ($c$ and $d$ are the same characters). 
\item[(iii)] [Type-3 event] If          the character at offset $startpos(P)$ within the search window is $startc(P)$ and at offset $endpos(P)$ is $endc(P)$ then we use Apostolico-GianCarlo's  Algorithm to look for an occurrence    of $P$ within the search window in $T$ and then  shift the search window to the right by    $|sparse(P)|+1$($|sparse(P)|$) when  $c$ and $d$ are different characters ($c$ and $d$ are the same characters).
\end{itemize}
{\bf ALGORITHM $A$}
\begin{tabbing}
Input(s): \= (1) Pattern string $P$ of length $n$; \\
          \> (2) Query string $T$ of length $m$;\\
Output(s): The starting positions of the occurrences of $P$ in $T$; \\ 
{\em Preprocessing}: \= (1) For $i \in [1..n]$, compute $N_{i}(P)$ using the $Z$ Algorithm [25]. \\
					 \> (2) \= For $c \in \Delta$, determine $List^{c}(P)$, the list of positions in $P$ where $c$ occurs. \\
                     \>     \> Then \= from these lists, compute \\ 
                     \>     \>      \= [a] for each $u, v \in \Delta$, $sparse^{(u,v)}(P)$; \\
                     \>     \>      \> [b] $sparse(P) = |sparse^{(a,b)}(P)|=max_{u,v \in \Delta}^{} |sparse^{(u,v)}(P)|$. \\
                     \> (3) From $sparse(P)$, compute $startc(P), endc(P), startpos(P)$, and $endpos(P)$. \\
                     \> (4) For $c \in \Sigma$, compute $shift^{c}(P)$. \\                     
{\em Search}:\= \\
          \> [1] \= [a] Set $i=0$ and $j=n$ [Search Window set to $[1..n]$] \\
          \>     \> [b] Set $\hat{j}= endpos(P)$ and $\hat{i} = startpos(P)$; \= [Indices of last and first characters \\
          \>     \>                                                           \>  of $sparse(P)$ in $P$] \\
          \> [2] \= while $(j  <  m)$ [While the search window is contained in $T$] \\
          \>     \> [a] Let $c = T[i+\hat{j}]$; $d=T[i+\hat{i}]$; \= [Characters in search window aligned with \\
          \>     \>                                               \>  the last and first characters of $sparse(P)$]\\
          \>     \> [i] \= if  \= ($c \ne endc(P)$)  [Type-1 event] \\
          \>     \>     \>     \>  $i = i + shift^{c}(P)$; $j = j + shift^{c}(P)$; [Shift search window by $shift^{c}(P)$] \\
          \>     \> [ii] \= else\=if ($d \ne startc(P)$) [Type-2 event] \\
          \>     \>     \>     \> if (\=$startc(P) == endc(P)$) \\
           \>    \>     \>     \>     \> $i=i+|sparse(P)|$; $j=j+|sparse(P)|$ [Shift search window by $|sparse(P)|$] \\
           \>     \>     \>     \> else\= \\ 
           \>     \>     \>     \>     \> $i=i+|sparse(P)|+1$; $j=j+|sparse(P)|+1$ [Shift search window by $|sparse(P)|+1$]\\
          \>     \> [iii]\= else\= [Type-3 event] \\
          \>     \>     \>     \> Call $Apostolico-GianCarlo(T, P, i, j)$; [Look for $P$ in $T[i+1..j]$]\\
          \>     \>     \>     \> if (\=$startc(P) == endc(P)$) \\
          \>     \>     \>     \>     \> $i=i+|sparse(P)|$; $j=j+|sparse(P)|$ [Shift search window by $|sparse(P)|$] \\
          \>     \>     \>     \> else\= \\ 
          \>     \>     \>     \>     \> $i=i+|sparse(P)|+1$; $j=j+|sparse(P)|+1$.[Shift search window by $|sparse(P)|+1$] \\
\end{tabbing}
{\bf ALGORITHM $B$} \newline 
We now define Algorithm $B$  by making the following simple modification to Step $[2][a][iii]$ of  the search phase of Algorithm $A$: 
\begin{quote} 
Replace the statement "Call $Apostolico-GianCarlo(T, P, i, j)$" by          the     statement "Call $Random-Match(T, P, i, j)$".
\end{quote} 
The $Apostolico-Giancarlo$ Algorithm determines an     exact match between      $P$ and $n$ characters in the search window by inspecting      the aligned pairs     in a specific order until it encounters a mismatch or finds a match in all $n$ characters. However, $Random-Match$    inspects    the        aligned      pairs in a random order until it encounters a mismatch or finds a match in all $n$ characters. 
\section{Analysis of Algorithms $A$ and $B$}
In this section, we present the analysis of Algorithms $A$ and $B$. We now present the main results in this paper. The  proofs follow.
\begin{theorem}
Given any pattern  string  $P$  of length $n$ and a query string $T$ of length $m$, Algorithm $A$ finds all occurrences of $P$ in $T$ is $O(m)$ time.
\end{theorem}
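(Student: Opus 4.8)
The plan is to prove two things separately: (i) \emph{correctness}, that Algorithm $A$ reports exactly the starting positions at which $P$ occurs in $T$, and (ii) the \emph{running-time} bound of $O(m)$. Correctness will follow from a ``safe-shift'' invariant maintained across iterations of the while-loop in Step $[2]$, while the time bound will follow from bounding the number of iterations together with the amortized cost of the Apostolico--Giancarlo calls through the global memory vector $M$.

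For correctness I would argue by induction on the iterations of Step $[2]$, maintaining the invariant that at the start of each iteration no occurrence of $P$ begins strictly to the left of the current search window. Granting the invariant, every full occurrence of $P$ forces $sparse(P)$ to sit at the fixed offset $[startpos(P),endpos(P)]$ inside the window, so in particular both anchor characters $startc(P)$ and $endc(P)$ must match there; hence a Type-3 event is triggered at every genuine occurrence, and since the Apostolico--Giancarlo routine decides an exact match of $P$ against the window correctly, no occurrence is ever missed or falsely reported. The remaining work is to show that each shift preserves the invariant. The structural fact I would lean on is that, by definition, $sparse(P)$ contains neither $startc(P)$ nor $endc(P)$ strictly in its interior. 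For a Type-1 event the observed text character $c$ disagrees with $endc(P)$, and sliding $sparse(P)$ by $shift^{c}(P)$ realigns the rightmost occurrence of $c$ inside $sparse(P)$ (or clears $sparse(P)$ past that text position when $c\notin sparse(P)$, or past the whole pattern when $c\notin P$); interior-freeness guarantees that no skipped alignment could have placed $endc(P)$ at that position, so no occurrence is skipped. For a Type-2 event the end anchor matches but the start anchor does not, and here I would use that the only pattern positions carrying $endc(P)$ are the end of $sparse(P)$ (and also its start, precisely when $startc(P)=endc(P)$); this pins down the single next alignment at which the already-matched $endc(P)$ can be reused, and the shift slides up to it, treating the coincident-anchor case $startc(P)=endc(P)$ separately. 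For a Type-3 event the post-verification shift is justified identically, since two matching anchors rule out all intermediate alignments by the same interior-freeness argument.

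For the running time I would first observe that every event shifts the window strictly to the right (each of $shift^{c}(P)$, $|sparse(P)|$, and $|sparse(P)|+1$ is at least $1$), so Step $[2]$ executes at most $m$ times, and the Type-1 and Type-2 events contribute only $O(1)$ work apiece, for $O(m)$ in total. The crux is bounding the aggregate cost of the Type-3 events. Rather than charge each Apostolico--Giancarlo call its own worst-case $O(n)$, I would keep the memory vector $M$ \emph{global} and let every call both read from and write to it: the entry $M[j]$ records the length of the longest suffix of $P$ known to match $T$ ending at position $j$, and the routine uses $N_{i}(P)$ together with $M$ to leap over text segments whose match status is already determined. I would then reproduce the standard Apostolico--Giancarlo amortized accounting in this setting, defining a potential from the matched-suffix information stored in $M$ and showing that each character comparison is either the unique mismatch terminating a verification or else advances the recorded matched region, so that over the entire search each position of $T$ is charged only $O(1)$ comparisons. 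Summing the two contributions yields the claimed $O(m)$ bound.

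I expect the main obstacle to be precisely this last amortization for the Type-3 events, because the Apostolico--Giancarlo linear-time guarantee is normally tied to that algorithm's \emph{own} shift rule, whereas here the inter-call shifts are dictated by the sparse-pattern statistics. I must therefore verify that $M$ remains a correct record of matched suffixes under these external shifts and, more delicately, that no text position ever has its suffix-match information recomputed, so that the telescoping/potential argument still caps the total number of comparisons at $O(m)$ rather than $O(mn)$. A secondary but necessary check is that the safe-shift invariant is never violated at the boundary cases, in particular when $startc(P)=endc(P)$ and when the mismatched character lies outside $sparse(P)$ or outside $P$, since an off-by-one there would either skip a genuine occurrence (breaking correctness) or destroy the strict-advance needed for the iteration bound.
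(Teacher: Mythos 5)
Your route is the same as the paper's: correctness from a safe-shift invariant resting on the interior-freeness of $sparse(P)$, and the $O(m)$ bound from $O(1)$ comparisons per Type-1/Type-2 event with strict advance, plus an amortized account of the Apostolico--Giancarlo calls through the global $M$ vector. The time half of your plan is essentially the paper's argument stated more carefully (the paper simply asserts that the Apostolico--Giancarlo accounting survives the externally dictated shifts; the verification you rightly flag is exactly what is missing there).

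The genuine gap is in your Type-2/Type-3 step, and it is not the ``secondary but necessary check'' you call it --- it is fatal, because that step is false for the algorithm as specified. You claim interior-freeness ``pins down the single next alignment at which the already-matched $endc(P)$ can be reused, and the shift slides up to it.'' Carry this out in the coincident-anchor case $startc(P)=endc(P)$: the unique next feasible alignment places the \emph{start} of $sparse(P)$ on the text position where $endc(P)$ was just matched, i.e.\ it lies $|sparse(P)|-1$ positions to the right; but Algorithm $A$ shifts by $|sparse(P)|$ (pseudocode) or $|sparse(P)|+1$ (prose), strictly past it, so the invariant breaks and occurrences are skipped. Concretely, let $P=\mathtt{abcda}$, so that $sparse(P)=P$ via the pair $(\mathtt{a},\mathtt{a})$, with $startpos(P)=1$, $endpos(P)=5$, $startc(P)=endc(P)=\mathtt{a}$, and let $T=\mathtt{bbbbabcda}$. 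The first window yields $c=T[5]=\mathtt{a}=endc(P)$ and $d=T[1]=\mathtt{b}\neq startc(P)$, a Type-2 event; the window is then shifted by $5$ (or $6$), jumping over the unique occurrence of $P$ at position $5$ of $T$, which is never reported. Hence no proof can close this step: Theorem~1 is false for Algorithm $A$ as written, and becomes provable only after the shift is corrected to $|sparse(P)|-1$ when $startc(P)=endc(P)$ (a similar overshoot afflicts the $+1$ shifts when $startpos(P)=1$, e.g.\ $P=\mathtt{abc}$, $T=\mathtt{bccabc}$). Two smaller points: your supporting claim that ``the only pattern positions carrying $endc(P)$ are the end of $sparse(P)$'' is false read literally, since $endc(P)$ may recur in $P$ outside $sparse(P)$; only positions inside $sparse(P)$ are constrained, and that is all the argument may use. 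And be aware that the paper's own proof shares the identical hole: it disposes of Type-2/Type-3 with the one-line assertion that no occurrence of $sparse(P)$ can start within the shifted portion, which the example above refutes.
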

\begin{theorem}
Given      any pattern string $P$ of length $n$ and a query string $T$ of length $m$ where each character is drawn uniformly at random, Algorithms $A$ and $B$ find all occurrences of $P$ in $T$ in $O(m/min(|Sparse(P)|, \Delta))$ expected time, where  $|sparse(P)|$ is atleast $\delta$ (i.e the number of distinct characters in $P$).
\end{theorem}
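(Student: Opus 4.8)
The plan is to express the expected running time as the expected number of window positions examined multiplied by the expected work done at each position, and to control both factors using the uniform randomness of $T$. Write $s=|sparse(P)|$, $u=startc(P)$, $v=endc(P)$, and recall that $sparse(P)$ begins with $u$, ends with $v$, and contains neither $u$ nor $v$ in its interior. First I would bound the expected work at a single position by $O(1)$. At every position the algorithm reads the character aligned with $endc(P)$; being uniform over the alphabet it equals $v$ with probability $1/\Delta$, in which case a second character is read and matches $startc(P)$ with probability $1/\Delta$. Only when both agree, a Type-3 event of probability $1/\Delta^2$, is the full Apostolico--Giancarlo (or Random-Match) routine invoked; on uniform random text each aligned pair agrees independently with probability $1/\Delta$, so the comparisons made before the first mismatch are stochastically dominated by a geometric variable of mean $\Delta/(\Delta-1)=O(1)$, and Apostolico--Giancarlo never does more than this. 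Summing gives expected per-position work $1+\frac{1}{\Delta}+\frac{1}{\Delta^2}\,O(1)=O(1)$.

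Next I would show the expected shift $\sigma$ satisfies $E[\sigma]=\Omega(\min(s,\Delta))$. The combinatorial heart is that $\sigma\le k$ forces the aligned character to occur among the last $k+1$ positions of $sparse(P)$, so at most $k+1$ distinct characters produce a shift of at most $k$; hence for a uniformly random aligned character $\Pr[\sigma\le k]\le (k+1)/\Delta$, equivalently $\Pr[\sigma\ge k]\ge 1-k/\Delta$. (In a Type-2 or Type-3 event the shift is $s$ or $s+1$, hence at least $s$, so these only help.) Writing $E[\sigma]=\sum_{k\ge 1}\Pr[\sigma\ge k]$ and summing $1-k/\Delta$ over $k=1,\dots,\min(s,\Delta)$ yields $\min(s,\Delta)-\frac{\min(s,\Delta)^2}{2\Delta}=\Omega(\min(s,\Delta))$, covering both regimes $s\le\Delta$ and $s>\Delta$ at once.

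Finally I would combine the two estimates by a renewal argument. The key structural fact is that the characters read against $endc(P)$ at successive window positions lie at strictly increasing text indices (each shift is at least $1$), so these characters are independent and uniform; consequently, conditioned on the history of the search, the expected shift at each step is still $\Omega(\min(s,\Delta))$ and the expected work is still $O(1)$. Since the window advances a total distance of at most $m$, a Wald-type (optional stopping) argument bounds the expected number of positions by $E[N]=O(m/\min(s,\Delta))$, and applying the same argument to the work sequence bounds the total expected work by $O(1)\cdot E[N]=O(m/\min(s,\Delta))$; the hypothesis $|sparse(P)|\ge\delta$ ensures the denominator is at least the number of distinct characters of $P$, so the bound is never vacuous.

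The step I expect to be the main obstacle is making this independence fully rigorous. A Type-3 event scans the entire search window and therefore reveals text characters that a later, slightly shifted window may reread, so the examined characters are not globally independent. I would resolve this by conditioning only on the end-character comparisons, whose positions are genuinely increasing, and by absorbing the rare full scans, which occur with probability only $1/\Delta^2$ per window, into the $O(1)$ per-position work term without perturbing the shift distribution, so that the per-step conditional estimates required by the renewal argument continue to hold.
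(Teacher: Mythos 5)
Your proposal is correct and takes essentially the same route as the paper: the paper's proof likewise bounds the work per event by $O(1)$ (two comparisons for Type-1/Type-2 events, plus a geometric $O(1)$ expected match length for Type-3 invocations of Apostolico--Giancarlo/Random-Match) and the expected shift per event by $\Omega(\min(|sparse(P)|,\Delta))$, then divides the total window travel $m$ by the expected shift. Your tail-sum estimate $\Pr[\sigma\le k]\le (k+1)/\Delta$ and the explicit Wald/renewal argument (including the handling of dependence created by Type-3 rescans) are more rigorous executions of steps the paper treats informally --- it uses a case analysis on $\delta$ versus $\Delta/2$ for the shift bound and simply asserts that the number of events is $m$ divided by the expected shift --- but the underlying decomposition is identical.
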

\begin{proof} 
{\bf of Theorem $1$}: The Algorithm $A$ during its search phase searches for $P$ by essentially looking for a match for the first and last characters of      $sparse(P)$, a $2$-sparse pattern of $P$, with the characters in the search window   at offsets $startpos(P)$ and $endpos(P)$ respectively. Let  $c = endc(P)$ and $d =startc(P)$ be   the   last  and first  characters of $sparse(P)$. The following three scenarious(events) are possible. (i){\em Type-1} event  happens if there  is a mis-match between $c$ and the character at offset $endpos(P)$ within  the search window; (ii) {\em Type-2} event happens if there is  a match between $c$  and the  character   at offset $endpos(P)$ within the search  window and a mis-match between  $d$  and  the character at offset $startpos(P)$ within the search window, and (iii) {\em Type-3} event happens when there is a match between $c$ and the character  at  offset  $endpos(P)$ within  the search window and a match between $d$ and the character  at offset $startpos(P)$ within the search window. \newline \newline
In the case of Type-1 event, there is exactly one character in $T$ that is looked at and the search window   is     shifted by $shift^{c}(P) \ge 1$. In the case of Type-2 event, there   are two characters in $T$ that are looked at and the search window is shifted by $|sparse(P)|$ $(|sparse(P)+1)$ when $c=d$ $(c \ne d)$. In the case of type $3$ event, we call the Apostolico-Giancarlo  Algorithm where the query string is the $n$ characters in the  current search window. However, we maintain the $N$ and $M$  vectors as global variables so that when we repeatedly invoke the    Apostolico-Giancarlo Algorithm the M values computed for any particular  position of $T$ during any invocation is available without recomputation for future invocations. This ensures  that the total number of character comparisons done during type-3 events is $O(m)$. This bound follows from the analysis of Apostolico-Giancarlo Algorithm. Now to bound the total number of comparisons done by algorithm $A$, we only need  to compute the number of comparisons performed by $A$ that are associated with $Type-1$ and $Type-2$ events. \newline \newline
We bound the number of comparisons  made by $A$ due to  Type-1 and     Type-2 events   by     partitioning the search phase into  sub-phases, where each sub-phase consists of maximal sequence of events that begins with any type of event and  is terminated by a  Type-3 event, and then account for the number of comparisons made during Type-1 and   Type-2 events of the sub-phases. Notice that except the first sub-phase, every other sub-phase begins  with either a Type-1 or a          Type-2 event,  and ends with a    contiguous run of one or more Type-3 events. Notice that across sub-phases there can be a   overlap in the character comparisons only between the   last event of a sub-phase and the first   event of the next sub-phase. From an earlier observation, we notice that each Type-1 (Type-2) event   requires 1 (2) character    comparisons and the search window is shifted by at least $1$. This implies   that the total number of character comparisons associated with type-1 and type-2 events is at most $2m$. Therefore the total number of character comparisons due to Type-1, Type-2 and Type-3 events is $O(m)$. \newline \newline
Now, we   establish that Algorithm $A$ finds all occurrences of $P$ in $T$. In the case of Type-1 event, we know that     $c \ne endc(P)$ and the search window is shifted to the right by $shift^{c}(P)$. Recall  that $shift^{c}(P)$ indicates the    number of positions to the left of $endc(P)$ in $P$  where the earliest occurrence of  $c$ happens. Now, by  shifting the search window by $shift^{c}(P)$, we  will show that no occurrence of $sparse(P)$ will be skipped  and hence no occurrence of $P$ will be skipped. There are two     situations     possible      depending      on whether or not $endc(P)$ occurs within the shifted interval. If $endc(P)$ did      not occur within the shifted interval of $T$ then we can see that  shifting the search window to the right by $shift^{c}(P)$ will not      result in skipping $sparse(P)$. Now, we consider the situation when $endc(P)$ occurs     within the shifted interval.  From  the   definition of $shift^{c}(P)$, we can see that earliest occurrence of $c$ in $P$ will be $shift^{c}(P)$ positions to  the left of $endpos(P)$, whereas in this situation $c$ occurs in the search window less than $shift^{c}(P)$ positions to the left of $endpos(P)$. Therefore, we can conclude that no occurrence of $sparse(P)$ can start within the shifted portion of the search window. In the case of $Type-2$ event, we can observe that $c == endc(P)$ but $d \ne startc(P)$ and the search window is shifted  to the right by either $|sparse(P)|$ (|sparse(P)|+1) if $c=d$ ($c\ne d$). Notice  in       this case, since characters $c$     and $d$       do not occur within $sparse(P)$ (i.e. occur only at the start and end of $sparse(P)$), the next occurrence of $sparse(P)$ in $T$ cannot start within the shifted portion of the search window. In    the case of $Type-3$ event, we can observe that $c ==endc(P)$ and $d = startc(P)$ and after invocing Apostolico- Giancarlo Algorithm we   shift    the window to the right by either $|sparse(P)|$ ($|sparse(P)| + 1$) if $c=d$ ($c \ne d$). Notice  in     this case also, since characters $c$and $d$ do not occur within $sparse(P)$, the    next   occurrence of $sparse(P)$ cannot start within the shifted portion of the search window. Hence, Algorithm $A$ finds all   occurrences of $P$ in $T$ correctly. 
\end{proof} \newline \newline 
\begin{proof} 
{\bf of Theorem $2$}: We bound the expected number of comparisons performed by Algorithm $A$ during its search phase  by looking at the expected number of comparisons performed in comparison to the expected length by which the   search window is shifted during each of the three type of events it encounters. For Type-1 and Type-2 events the number of character comparisons with the query string $T$ is at most $2$. For Type-3 events, we are invoking the Apostolico-Giancarlo Algorithm. From Lemma $6$, we know that when invoking Apostolico-Giancarlo Algorithm for the pattern string $P$ and the query string $T$ whose characters are drawn uniformly at random, the expected number of matches before a mismatch is $O(1)$. Therefore the number of character comparisons for    Type-3 event is also $O(1)$. Now to bound the total number of comparisons, it is sufficient to bound the number of events.  From   Lemma $5$, we know that the expected  length by which the search window is shifted after encountering a Type-1,  Type-2 or Type-3 event is at least $O(min(|sparse(P)|,\Delta)$. Since the total amount by which the search window can be shifted is $m$, we can therefore see that the total number of events is $O(m/min(|sparse(P)|,\Delta))$. The proof for Algorithm $B$ is almost the same.
\end{proof} 
\begin{lemma}
For any    pattern string $P$, the length of $sparse(P)$, the longest $2$-sparse  pattern of $P$, is at least $\delta$,   where $\delta$ is the number of distinct characters in $P$.
\end{lemma}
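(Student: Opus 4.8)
The plan is to exhibit, for a single well-chosen ordered pair $(u,v)$, one valid $2$-sparse substring of $P$ whose length is already at least $\delta$. Since $sparse^{(u,v)}(P)$ is by definition the \emph{longest} substring that starts with $u$, ends with $v$, and avoids both $u$ and $v$ in its interior, and since $sparse(P)$ is the longest such pattern over all pairs, producing one witness of length $\ge \delta$ forces $|sparse(P)| \ge \delta$ immediately. So the entire content is to construct one long-enough valid substring; everything after that is definitional.

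First I would locate the prefix that first witnesses every distinct letter: let $q$ be the least index for which $P[1..q]$ already contains all $\delta$ distinct characters of $P$. By minimality of $q$, the character $v := P[q]$ makes its very first appearance at position $q$, so $v$ does not occur anywhere in $P[1..q-1]$. This is precisely what guarantees, for free, that $v$ never appears in the interior of any substring that ends at $q$ and starts strictly before $q$.

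Next I would choose the starting point. For each of the remaining $\delta-1$ characters, record its rightmost occurrence inside $P[1..q-1]$; these give $\delta-1$ distinct positions in $\{1,\dots,q-1\}$. Let $i$ be the smallest of them and set $u := P[i]$. Because $i$ is the rightmost occurrence of $u$ before $q$, the letter $u$ cannot appear in $P[i+1..q-1]$, and $v$ cannot appear there either by the previous paragraph; hence $P[i..q]$ is a legitimate $2$-sparse pattern for $(u,v)$, and $|sparse^{(u,v)}(P)| \ge q-i+1$.

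The crux is the length bound, which is a short pigeonhole step: the $\delta-1$ recorded positions are distinct and all lie in $\{1,\dots,q-1\}$, so their minimum satisfies $i \le q-\delta+1$, whence $P[i..q]$ has length $q-i+1 \ge \delta$. The only place the argument could slip is in getting this counting estimate right while simultaneously confirming that the interior genuinely excludes both $u$ and $v$; everything else is bookkeeping. Finally I would dispose of the degenerate case $\delta=1$ separately, where $P$ is a single repeated letter: the step over the $\delta-1$ other characters is then vacuous, but $sparse^{(u,u)}(P)$ trivially has length at least $1=\delta$, so the claim still holds.
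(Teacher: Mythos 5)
Your proof is correct and is essentially the paper's own argument in mirror image: the paper takes $start$ to be the character occurrence minimizing the last-occurrence index and $end$ the maximal first occurrence to its right, whereas you take the right endpoint $q$ to be the maximal first occurrence and the left endpoint $i$ the minimal last occurrence before $q$; both constructions then yield length at least $\delta$ by the same counting observation that all $\delta$ distinct characters appear in the witness interval. Your explicit treatment of the degenerate case $\delta = 1$ is a small point of care that the paper's version glosses over.
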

\begin{proof}
From definition, we know that $P$ has $\delta$ distinct characters. Now, let $a$ be the character in $P$ whose last   occurrence has the smallest index and let its index in $P$ be denoted by $start$. Let $b$ be the character in $P$ whose first occurrence in $P$ to the right of $start$ has the highest index and let its position in $P$ be denoted by $end$. Now, we can observe that every character in $P$ appears at least once within the interval $[start, end]$ and characters $a$ and $b$ do not appear within the interval $[start, end]$.  Since  there are $\delta$ distinct characters in $P$, the length of $sparse^{(a,b)}(P)$ is at least $\delta$. Hence the length of $sparse(P)$ is at least $\delta$.
\end{proof}
\begin{lemma}
For any pattern string $P$, Algorithm $A$ preprocesses $P$ in $O(n \Delta^{2})$ time to determine (i)$N_i(P)$, for $i \in [1..n]$, (ii) $sparse(P)$, and (iii) $shift^{c}(P)$, for $c \in \Sigma$, where $\Delta$ is the number of characters in its alphabet $\Sigma$.
\end{lemma}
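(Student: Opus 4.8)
The plan is to bound the running time of each of the four preprocessing steps separately, argue that each is correct, and then observe that the dominant term is $O(n\Delta^2)$.

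First I would dispose of the two cheap steps. Since $N_i(P)$ is the length of the longest suffix of $P[1..i]$ that matches a suffix of $P$, these values are obtained from the $Z$-values of the reversed pattern by a change of index, so the $Z$ Algorithm [25] produces all of them in a single $O(n)$ pass. Likewise, building the occurrence lists $List^c(P)$ for every $c \in \Delta$ costs one scan of $P$, i.e. $O(n)$ total, since each of the $n$ positions is appended to exactly one list.

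The heart of the argument is step (ii), computing $sparse(P)$. I would show that for a single ordered pair $(u,v)$ the substring $sparse^{(u,v)}(P)$ is computable in $O(n)$ time by one left-to-right scan of $P$. When $u \neq v$, maintain the index $lastU$ of the most recent occurrence of $u$: on reading a $u$ at position $i$, reset $lastU = i$ (an earlier start would now enclose a $u$); on reading a $v$ at position $j$ with $lastU$ defined, record the candidate $P[lastU..j]$, whose interior is free of $u$ and $v$ by construction, and then clear $lastU$ (a later $v$ would enclose this one). When $u = v$, the admissible substrings are exactly those spanning two consecutive occurrences of $u$, which a single scan of $List^u(P)$ exposes. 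In both cases, keeping the longest candidate seen and replacing on ties so as to retain the latest (hence rightmost) one yields $sparse^{(u,v)}(P)$. Since there are at most $\Delta^2$ ordered pairs, computing all of them and taking the maximum length (step [b]) costs $O(n\Delta^2)$, and the derived quantities $startc(P), endc(P), startpos(P), endpos(P)$ follow in $O(1)$.

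Finally, step (iii) is a right-to-left scan of $sparse(P)$: the first time each character is met gives its rightmost occurrence, from which $shift^c(P)$ is read off; characters absent from $P$ are assigned $n$ and characters present in $P$ but outside $sparse(P)$ are assigned $|sparse(P)|+1$, both decidable from the lists $List^c(P)$. This costs $O(|sparse(P)| + \Delta) = O(n + \Delta)$. Summing the four steps gives $O(n) + O(n) + O(n\Delta^2) + O(n+\Delta) = O(n\Delta^2)$. The only nontrivial point, and the one I would spell out most carefully, is the correctness of the single-scan computation of $sparse^{(u,v)}(P)$: that the ``no $u$ or $v$ strictly inside'' condition is enforced by the resetting rule, and that the tie-breaking indeed returns the rightmost longest substring. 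The timing bounds elsewhere are immediate.
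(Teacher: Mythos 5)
Your proposal is correct and follows essentially the same route as the paper's proof: compute the $N_i(P)$ values in $O(n)$ with the $Z$ Algorithm, compute $sparse^{(u,v)}(P)$ by an $O(n)$ scan for each of the $\Delta^2$ ordered pairs (giving the dominant $O(n\Delta^2)$ term), and then obtain the $shift^{c}(P)$ values by one scan of $sparse(P)$. The only difference is that you spell out the per-pair single-scan procedure (the $lastU$ resetting rule and tie-breaking) that the paper simply asserts as a trivial $O(n)$ scan, which is a welcome addition but not a different argument.
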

\begin{proof}
First, we would like to recall that $N_{i}(P)$, $i \in [1..n]$, is the length of the longest suffix of $P[1..i]$ that is also  a suffix of $P$. We compute compute $N_i(P)$, for $i \in [1..n]$ in $O(n)$ time using the $Z$ Algorithm [].     Second,   for each ordered pair $a,b \in P$ of characters, we can scan $P$ in $O(n)$ time to find the maximum length substring  of $P$ starting with 
$a$ and ending with $b$ such that there is no occurrence of $a$ or $b$ in between. Since   there are     $\Delta^2$ ordered pairs, we can        trivially find the maximum length for all pairs of characters in $P$ and from them   choose the longest in $O(n\Delta^2)$ time. Finally, we would like to recall that $shift^{c}(P)$, for $c\in \Sigma$, is the distance between the last character of 
$sparse(P)$  and the rightmost occurrence of $c$ in $sparse(P)$. If $c$ is not present in $P$    then $shift^{c}(P)$  is set to $n$, the length of the pattern string $P$. Therefore, in $O(n)$ time, we can scan
$sparse(P)$ to find $shift^{c}(P)$, for $c \in \Delta$. 
\end{proof}
\begin{lemma}
For   any    pattern   string $P$, during the search phase of Algorithm $A$, the expected length of shift of $P$ after a Type-1, Type-2 or Type-3 event is at least $O(min(|sparse(P)|,\Delta))$.
\end{lemma}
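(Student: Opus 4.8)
The plan is to treat the three event types separately, noting that the only one carrying any probabilistic content is Type-1. For a Type-2 or Type-3 event the search window is shifted deterministically by $|sparse(P)|$ or $|sparse(P)|+1$, so its shift is already at least $|sparse(P)| \geq \min(|sparse(P)|,\Delta)$ with probability $1$, and no averaging is needed. Thus the whole burden of the lemma falls on Type-1 events, where the shift equals $shift^{c}(P)$ and $c$ is the text character of $T$ aligned with offset $endpos(P)$. Since the characters of $T$ are drawn uniformly at random, I would bound the expected shift per window examination from below by $\frac{1}{\Delta}\sum_{c \ne endc(P)} shift^{c}(P)$: the factor $\frac{1}{\Delta}$ accounts for the probability that the aligned character equals a given symbol, and dropping the $c=endc(P)$ term only weakens the bound, since that term corresponds to a Type-2/Type-3 shift of at least $|sparse(P)|$.

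To lower-bound this sum I would split the symbols $c \ne endc(P)$ into those absent from $sparse(P)$ and those present in it. By the definition of $shift^{c}(P)$, every symbol of the first kind contributes at least $|sparse(P)|$: a symbol not in $P$ contributes $n \geq |sparse(P)|$, and a symbol in $P$ but outside $sparse(P)$ contributes exactly $|sparse(P)|+1$. Writing $k$ for the number of distinct symbols that occur in $sparse(P)$, there are $\Delta-k$ symbols of this first kind, contributing in total at least $(\Delta-k)\,|sparse(P)|$.

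The crux of the argument is the treatment of the symbols that do occur in $sparse(P)$. For such a symbol $c$, $shift^{c}(P)$ is the distance from the rightmost occurrence of $c$ in $sparse(P)$ to its last position; distinct symbols have distinct rightmost occurrences and hence pairwise distinct shift values, and since $endc(P)$ is excluded these are distinct positive integers in $\{1,\dots,|sparse(P)|-1\}$. Therefore the $k-1$ such symbols contribute at least $1+2+\cdots+(k-1)=\frac{k(k-1)}{2}$. Combining the two parts, the expected shift is at least $\frac{1}{\Delta}\bigl[(\Delta-k)\,|sparse(P)| + \frac{k(k-1)}{2}\bigr]$. I would then finish with a one-line case split on $k$: if $k \leq \Delta/2$ the first term already gives $\frac{1}{2}|sparse(P)| \geq \frac{1}{2}\min(|sparse(P)|,\Delta)$, while if $k > \Delta/2$ the second term gives $\frac{k(k-1)}{2\Delta} = \Omega(\Delta) = \Omega(\min(|sparse(P)|,\Delta))$, using that $k \leq \min(|sparse(P)|,\Delta) \leq \Delta$ (a distinct-symbol count of $k$ forces $\Delta$ to be the smaller quantity when $k$ is large). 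Either way the expected shift is a constant fraction of $\min(|sparse(P)|,\Delta)$.

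The main obstacle is precisely the interior symbols of $sparse(P)$: a symbol whose rightmost occurrence lies just before the last position produces a shift as small as $1$, so no useful bound holds for an individual Type-1 shift. What rescues the average is the distinctness of the shift values, which forces the small shifts to be few and to be offset either by the many symbols lying outside $sparse(P)$ (the small-$k$ regime) or by the quadratic sum $\frac{k(k-1)}{2}$ (the large-$k$ regime). The only remaining points, both routine, are the bookkeeping of the conditioning factor $\frac{1}{\Delta}$ for Type-1 events and the verification that $k \leq \min(|sparse(P)|,\Delta)$, which holds because the distinct symbols of $sparse(P)$ number at most its length and at most the alphabet size.
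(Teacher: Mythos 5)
Your proof follows essentially the same route as the paper's: split by event type, dispose of Type-2/Type-3 via the deterministic shift of at least $|sparse(P)|$, and for Type-1 lower-bound the expected value of $shift^{c}(P)$ by separating symbols inside and outside $sparse(P)$, summing the distinct in-$sparse(P)$ shift values into a quadratic term, and finishing with the same case split on whether that symbol count exceeds $\Delta/2$. If anything, your write-up is more careful than the paper's, which conflates the number of distinct symbols occurring in $sparse(P)$ with $\delta$ (the number of distinct symbols of all of $P$), drops the $1/\Delta$ normalization in its displayed expectation, and asserts the shift is \emph{equally likely} to be each value in $[1..\delta]$ when only the pairwise distinctness of the shift values (your argument) is actually justified.
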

\begin{proof}
Notice that the query string $T$ is of length $m$ and each of its characters are drawn uniformly    at random from the alphabet $\Sigma$ of size $\Delta$. In the case of a Type-1 event, we can observe that the search window  is shifted by $shift^{c}(P)$.    Notice       that the mismatch character $c$ is equally likely to be any character in $\Sigma$ other than $endc(P)$. So, we can observe that if $c \in sparse(P)$, then $shift^{c}(P)$ is equally likely to be any value in the interval $[1..\delta]$ and if $c \notin sparse(P)$ then $shift^{c}(P)$ is at least $|sparse(P)|$. Therefore, the expected shift value will be at least $(1 + 2 + ... + \delta) + (\Delta-/\delta)|sparse(P)|$. Now, based on whether $\delta < \Delta/2$ or $\delta \geq \Delta/2$, we can evaluate the above sum. If $\delta < \Delta/2$, we can observe that this sum is $O(|sparse(P)|)$, otherwise the sum is $O(|\Delta|)$. Therefore, the above sum is at least $O(min(|sparse(P)|,\Delta))$. In the case of Type-2 event, we can observe that the search window is shifted by the length of    $sparse(P)$. Similarly after a type-3 event, Apostolico-Giancarlo Algorithm is called and after that the  search window is shifted by the length of $sparse(P)$.  Hence in all three types of events the expected search window shift is at least $O(min(|sparse(P)|,\Delta)) \geq \delta$. Hence the result.
\end{proof}
\begin{lemma}
For any given pattern string $P$ of length $n$ from $\Sigma$    and a query string $T$ whose characters are drawn independently and uniformly from $\Sigma$, the expected number of matches before a mismatch when invoking Apostolico-Giancarlo or Random-Match Algorithm  is $O(1)$.
\end{lemma}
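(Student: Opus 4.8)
The plan is to reduce the quantity of interest to a simple geometric tail bound. Fix one invocation of the matching subroutine at a given alignment of $P$ against the search window. The subroutine inspects aligned pairs one at a time --- Apostolico--Giancarlo in a prescribed order, Random--Match in a uniformly random order --- and halts at the first mismatch. Let $X$ denote the number of matches observed before that first mismatch. My goal is to show $E[X] \le 1/(\Delta-1) = O(1)$, uniformly over $P$ and over the comparison order.

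The key observation is that each individual comparison pits a \emph{fixed} character of $P$ against a character of $T$ that is uniform over $\Sigma$. Concretely, if the $j$-th comparison examines position $t_j$ of $T$ against $P[i_j]$, then since the characters of $T$ are drawn independently and uniformly from an alphabet of size $\Delta$, we have $\Pr[T[t_j] = P[i_j]] = 1/\Delta$. Because the positions $t_1, t_2, \dots$ inspected within a single window alignment are pairwise distinct, the corresponding $T$-characters are mutually independent; hence the events ``comparison $j$ is a match'' are independent, each of probability $1/\Delta$. This independence holds regardless of the order in which the positions are inspected, so the same analysis applies verbatim to both algorithms; for Random--Match one simply conditions on the (algorithm-chosen, $T$-independent) inspection order and obtains the identical bound.

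Granting independence, the tail of $X$ is immediate: the event $\{X \ge k\}$ is contained in the event that the first $k$ inspected pairs all match, so $\Pr[X \ge k] \le (1/\Delta)^{k}$, with equality while $k$ does not exceed the window length. Summing the tail via the standard identity for nonnegative integer variables gives
\begin{equation}
E[X] = \sum_{k \ge 1} \Pr[X \ge k] \le \sum_{k \ge 1} \Delta^{-k} = \frac{1}{\Delta - 1},
\end{equation}
which is at most $1$ whenever $\Delta \ge 2$, and therefore $O(1)$. (When $\Delta = 1$ the pattern and text are over a single letter and the statement is degenerate.)

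The step I expect to require the most care is the independence claim, and in particular its validity across \emph{repeated} invocations. Within a single window the inspected positions are distinct, so independence is clean; the delicate point is that Apostolico--Giancarlo reuses previously computed $M$-values, so some text positions may have been revealed during earlier invocations and are no longer fresh uniform characters. I would handle this by noting that the reuse of $M$-values only \emph{suppresses} comparisons the algorithm would otherwise perform, and that the matches it thereby certifies were themselves first established, during the earlier invocation that recorded them, by comparisons against then-fresh independent uniform characters. Charging each certified match to the first-time comparison that produced it preserves the per-comparison match probability of $1/\Delta$ and leaves the geometric tail bound intact.
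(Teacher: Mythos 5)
Your proof is correct and follows essentially the same route as the paper's: both reduce the quantity to a geometric random variable with per-comparison match probability $1/\Delta$ (using that the text characters compared are independent and uniform) and sum the resulting series to get $O(1)$. Your write-up is in fact more careful than the paper's one-line computation --- the explicit tail bound $E[X]=\sum_{k\ge 1}\Pr[X\ge k]\le 1/(\Delta-1)$ is cleaner than the paper's slightly garbled series formula, and your closing discussion of conditioning across repeated invocations (reused $M$-values, previously revealed text positions) addresses a subtlety the paper silently ignores.
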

\begin{proof}
In $A$, each time we invoke Apostolico-Giancarlo/Random-Match  Algorithm, we   attempt to match $P$ with the search window consisting of $n$ length substring of $T$, where each character is drawn independently and uniformly from $\Sigma$. So, the  expected length of a match = $(1/\Delta + 1/\Delta^2 + 1/\Delta^3 + ... )(\Delta-1/\Delta) = O(1)$.
\end{proof}
\section{Conclusions and Future Work}
In this paper, we present algorithms for exact string matching that require  a  worst      case search time of $O(m)$, and sub-linear expected search time of $O(m/min(|sparse(P)|, \Delta))$, where $|sparse(P)|$ is at least $\delta$ (i.e. the number of distinct characters in $P$), and for most pattern strings is observed to be $\Omega(n^{1/2})$. We believe that a tighter analysis of our algorithms can establish sub-linear worst case run-time from the perspective of randomized analysis. We also believe that for a large class of pattern strings it seems plausible that one can theoretically establish that $|sparse(P)| = \Omega(n^{1/2})$. \newline \newline 
\section*{References}
\begin{hangref} 
\item[1] 	J. H. Morris and V. R. Pratt. A linear pattern-matching algorithm. n.40, (1970).	
\item[2]    M.C. Harrison. Implementation of substring test by hashing, Commun. ACM, vol 14, n.2, pp. 777-779, (1971)
\item[3]  	R. S. Boyer and J. S. Moore. A fast string searching algorithm. Commun. ACM, vol.20, n.10, pp.762--772, (1977.	
\item[4]  	D. E. Knuth and J. H. Morris and V. R. Pratt. Fast pattern matching in strings. SIAM J. Comput., vol.6, n.1, pp.323--350, (1977).	
\item[5]  	R. N. Horspool. Practical fast searching in strings. Softw. Pract. Exp., vol.10, n.6, pp.501--506, (1980).	
\item[6]  	Z. Galil and J. Seiferas. Time-space optimal string matching. jcss, vol.26, n.3, pp.280--294, (1983).	
\item[7]  	A. Apostolico and R. Giancarlo. The Boyer-Moore-Galil string searching strategies revisited. SIAM J. Comput., vol.15, n.1, pp.98--105, (1986).	
\item[8]  	R. F. Zhu and T. Takaoka. On improving the average case of the Boyer-Moore string matching algorithm. J. Inform. Process., vol.10, n.3, pp.173--177, (1987).	
\item[9]  	R. M. Karp and M. O. Rabin. Efficient randomized pattern-matching algorithms. ibmjrd, vol.31, n.2, pp.249--260, (1987).	
\item[10]  	D. M. Sunday. A very fast substring search algorithm. Commun. ACM, vol.33, n.8, pp.132--142, (1990).
item[11]  	A. Apostolico and M. Crochemore. Optimal canonization of all substrings of a string. Inf. Comput., vol.95, n.1, pp.76--95, (1991).		
\item[12]  	L. Colussi. Correctness and efficiency of the pattern matching algorithms. Inf. Comput., vol.95, n.2, pp.225--251, (1991).	
\item[13]  	M. Crochemore and D. Perrin. Two-way string-matching. J. Assoc. Comput. Mach., vol.38, n.3, pp.651--675, (1991).
\item[14]  	A. Hume and D. M. Sunday. Fast string searching. Softw. Pract. Exp., vol.21, n.11, pp.1221--1248, (1991).	
\item[15]  	P. D. Smith. Experiments with a very fast substring search algorithm. Softw. Pract. Exp., vol.21, n.10, pp.1065--1074, (1991).	
\item[16]  	R. Baeza-Yates and G. H. Gonnet. A new approach to text searching. Commun. ACM, vol.35, n.10, pp.74--82, ACM, New York, NY, USA, (1992).	
\item[17] R. Baeza-Yates. String searching algorithms. In W. Frakes and R. Baeza-Yates, editors, Information Retrieval: Algorithms and Data Structures, chapter 10, pages 219--240. Prentice-Hall, 1992.
\item[18]  	Z. Galil and R. Giancarlo. On the exact complexity of string matching: upper bounds. SIAM J. Comput., vol.21, n.3, pp.407--437, (1992).	
\item[19]  	T. Lecroq. A variation on the Boyer-Moore algorithm. Theor. Comput. Sci., vol.92, n.1, pp.119--144, (1992).	
\item[20]  	T. Raita. Tuning the Boyer-Moore-Horspool string searching algorithm. Softw. Pract. Exp., vol.22, n.10, pp.879--884, (1992).	
\item[21]   WU, S., MANBER, U., 1992, Fast text searching allowing errors, Commun. ACM. 35(10):83-91.
\item[22]  	L. Colussi. Fastest pattern matching in strings. J. Algorithms, vol.16, n.2, pp.163--189, (1994).
\item[23]  	M. Crochemore and A. Czumaj and L. Gcasieniec and S. Jarominek and T. Lecroq and W. Plandowski and W. Rytter. Speeding up two string matching algorithms. Algorithmica, vol.12, n.4/5, pp.247--267, (1994).		
\item[24] CROCHEMORE, M., LECROQ, T., 1997, Tight bounds on the complexity of the Apostolico-Giancarlo algorithm, Information Processing Letters 63(4):195-203.
\item[25] D. Gusfield, Algorithms on Strings, Trees and Sequences, Cambridge University Press (1997).
\item[26]  	C. Charras and T. Lecroq and J. D. Pehoushek. A Very Fast String Matching Algorithm for Small Alphabets and Long Patterns. Proceedings of the 9th Annual Symposium on Combinatorial Pattern Matching, Lecture Notes in Computer Science, n.1448, pp.55--64, Springer-Verlag, Berlin, rutgers, (1998).	
\item[27]  	T. Berry and S. Ravindran. A fast string matching algorithm and experimental results. Proceedings of the Prague Stringology Club Workshop '99, pp.16--28, ctu, (1999).	
\item[28]  C. Charras and T. Lecroq, Handbook of Exact String Matching Algorithms, Kings College Publications, (2004).
\item[29] M. Crochemore, C. Hancart and T. Lecroq, Algorithms on Strings, Université de Rouen, Cambridge University Press, 
(2007).
\item[30] S. Faro and T. Lecroq. The Exact Online String Matching Problem: a Review of the Most Recent Results, 
ACM Computing Surveys (CSUR), 2013. 
\item[31]  S. Faro and T. Lecroq, A String Matching Algorithms Research Tool, University of Rouen $(URL: http://www-igm.univ-mlv.fr/~lecroq/lec_en.html)$. 
\end{hangref} 
\end{document}